\pgfplotsset{compat=newest}
\let\NAT@parse\undefined
\newtheorem{thm}{Theorem}
\newtheorem{lem}{Lemma}
\newtheorem{conj}{Conjecture}
\newtheorem{prop}{Proposition}
\newtheorem{remark}{Remark}
\newcommand*\TTT{\ensuremath{\tau_{\mathrm{per}}}}
\newcommand*\Tinf{\ensuremath{\tau_\infty}}
\newcommand*\Ttwo{\ensuremath{\tau_2}}
\newcommand{\E}[1]{\ensuremath{\mathbb{E}\!\left\lbrack#1\right\rbrack}}
\newcommand{\V}[1]{\ensuremath{\mathbb{V}\!\left\lbrack#1\right\rbrack}}
\newcommand*\Prob{\ensuremath{\mathbb{P}}}
\newcommand*\diff{\mathop{}\!\mathrm{d}}
\definecolor{ceruleanblue}{rgb}{0.16, 0.32, 0.75}
\definecolor{tangelo}{rgb}{0.98, 0.3, 0.0}
\definecolor{aoenglish}{rgb}{0.0, 0.5, 0.0}
\definecolor{chromeyellow}{rgb}{1.0, 0.65, 0.0}
\definecolor{darkorchid}{rgb}{0.6, 0.2, 0.8}
\title{\LARGE \bf
	Performance implications of different p-norms in level-triggered sampling*
\thanks{*F.\ Allgöwer thanks the German Research Foundation (DFG) for support of this work within grant AL 316/13-2 and within the German Excellence Strategy under grant EXC-2075 - 285825138; 390740016.
	}}
\author{David Meister$^{1}$, 
and Frank Allgöwer$^{1}$%
\thanks{$^{1}$D.\ Meister and F.\ Allgöwer are with the University of Stuttgart, Institute for Systems Theory and Automatic Control, Stuttgart, Germany,\newline
		{\tt\small \{meister, allgower\}@ist.uni-stuttgart.de}}%
}
\begin{document}

\pubid{\begin{minipage}{\textwidth}\ \\[12pt] \copyright 2023 IEEE. Personal use of this material is permitted. Permission from IEEE must be obtained for all other uses, in any current or future media, including reprinting/republishing this material for advertising or promotional purposes, creating new collective works, for resale or redistribution to servers or lists, or reuse of any copyrighted component of this work in other works.\end{minipage}}

\maketitle

\begin{abstract}
	This work studies the performance of an event-based control approach, namely level-triggered sampling, in a standard multidimensional single-integrator setup.
	We falsify a conjecture from the literature that the deployed $\bm{p}$-norm in the triggering condition supposedly has no impact on the performance of the sampling scheme in that setting.
	In particular, we show for the considered setup that the usage of the maximum norm instead of the Euclidean norm induces a performance deterioration of level-triggered sampling for sufficiently large system dimensions, when compared to periodic control at the same sampling rate.
	Moreover, we investigate the performance for other $\bm{p}$-norms in simulation and observe that it degrades with increasing $\bm{p}$.
	In addition, our findings reveal the previously unknown role of the triggering rule in the cause of a recently discovered phenomenon:
	Previous work has shown for a single-integrator consensus setup that the commonly observed performance advantage of event-based control over periodic control can be lost in distributed settings with a cooperative control goal.
	In our work, we obtain similar results for a non-cooperative setting only by adjusting the norm in the level-triggered sampling scheme.
	We therefore demonstrate that the performance degradation found in the distributed setting originates from the triggering rule and not from the considered cooperative control goal.
\end{abstract}

\begin{keywords}
	Networked control systems, Event-triggered control, Sampled-data control, Control over communications.
\end{keywords}

\section{Introduction}

Instead of sampling periodically, event-based control closes the feedback loop only if an event-triggering condition is satisfied.
While periodic control still is the typical choice for sample-based control systems, event-based control schemes are a powerful alternative with relevance in various applications such as networked control systems or systems with energy efficiency requirements.
The advantage of event-based control for applications requiring resource-aware control loop design \cite{Lunze2018} lies in its capability to reduce the usage rate of communication networks or energy-consuming components in the control loop while maintaining a similar performance level when compared to periodic control.

Early results by \cite{Astrom2002} have shown this potential in a single-integrator setup with level-triggered sampling.
Subsequent works such as \cite{Meng2012} have generalized the findings to 2-dimensional integrator systems.
As it remained challenging to provide theoretical guarantees on the performance advantage of event-based control for more general settings, \cite{Antunes2016,Antunes2018,Khashooei2018,Antunes2020,Balaghiinaloo2021,Balaghiinaloo2022} have introduced and examined the concept of consistency for event-based control approaches.
In short, an event-based control scheme is called consistent if it provides an equal or better level of performance as any periodic controller while operating at the same average sampling rate.
The authors propose design techniques to arrive at consistent event-based control schemes for LQ- or $\mathcal{L}_2/\ell_2$-performance measures and linear time-invariant systems.

Another line of research on event-based control is aimed at finding optimal triggering rules and control laws with respect to predefined performance measures which formulate a trade-off between control performance and triggering rate.
Typical performance measures consider quadratic state costs and linear triggering rate costs including a scalar trade-off factor, such as in \cite{Molin2009} for a finite horizon or in \cite[Paper~II]{Henningsson2012} for an infinite horizon problem.
The authors of the latter paper provide a closed-form solution for an optimal triggering rule in the multidimensional integrator case.
Furthermore, the work \cite{Andren2017} proposes a numerical sampling scheme design method for an LQG setting with output feedback.
It builds upon the results in \cite{Mirkin2017,Braksmayer2017,Goldenshluger2017} which derive an $\mathcal{H}_2$-optimal controller design approach for arbitrary uniformly bounded sampling patterns and show that the design of optimal controller and optimal triggering rule are separable in this setting.
Moreover, the authors in \cite{Mi2022} present an event-based controller design approach for continuous-time LTI systems with $\mathcal{H}_\infty$-performance guarantees.
They also prove consistency of the proposed event-based controller with respect to the optimal periodic sample-based controller.

\pubidadjcol

Inspired by \cite[Rem.~1]{Goldenshluger2017}, we study the performance of level-triggered sampling schemes for an $n$-dimensional single-integrator system as considered in \cite[Paper~II]{Henningsson2012} and \cite{Goldenshluger2017}, but with different $p$-norms in the triggering condition.
We thereby falsify the conjecture in \cite[Rem.~1]{Goldenshluger2017} that the chosen $p$-norm supposedly plays no role for the performance of the level-triggered sampling scheme in this setting.
While the $2$-norm level-triggering scheme is well studied and outperforms periodic control for the same average sampling rate, see e.g., \cite{Henningsson2012,Goldenshluger2017,Mirkin2017}, our theoretical analysis shows a performance degradation for the $\infty$-norm case.
For sufficiently large system dimensions and equal average sampling rates, we find that periodic control outperforms event-based control with $\infty$-norm triggering.
Moreover, we examine the performance implications of other $p$-norms in the level-triggering condition in simulation.
We observe that the performance of the level-triggered sampling scheme degrades with increasing $p$.
In addition, we obtain new insights into a recently discovered phenomenon:
Our preliminary work \cite{Meister2022,Meister2023} shows for a distributed single-integrator consensus problem that the performance advantage of event-based control compared to periodic control might not generally be provided in cooperative setups.
With the results in this paper, we demonstrate that the decentralized level-triggering rule in \cite{Meister2022}, which can also be written in terms of the $\infty$-norm triggering rule, sits at the core of the found phenomenon.
This is due to the fact that, in this work, we consider the same triggering rule in a non-cooperative instead of a cooperative setting and still find the described performance degradation of event-based control.
We thereby demonstrate that the phenomenon is not a direct consequence of considering a cooperative control goal, but that further research on it should focus on examining decentralized triggering rules and their performance implications.

Our paper is structured as follows:
In Section~\ref{sec:setup}, we describe the setup and introduce the considered problem.
After that, we present our theoretical results in Section~\ref{sec:analysis}, and support our findings with numerical simulations in Section~\ref{sec:sim}.
We conclude this work in Section~\ref{sec:conclusion}.

\section{Setup and Problem Formulation}\label{sec:setup}

We consider a system that can be described by the following dynamical equation
\begin{equation}\label{eq:system}
	\diff x(t) = u(t) \diff t + \diff v(t),
\end{equation}
where $x(t)=\lbrack x_1(t),\dots,x_n(t)\rbrack^\top$ refers to the system state, $u(t)=\lbrack u_1(t),\dots,u_n(t)\rbrack^\top$ to the control input, and $v(t)=\lbrack v_1(t),\dots,v_n(t)\rbrack^\top$ to an $n$-dimensional standard Wiener process.

Our aim is to quantify and relate the performance of a selection of sample-based control algorithms, namely level-triggered and periodic control schemes.
In order to facilitate a fair comparison, we will do so under equal average sampling rates.
Note that this comparison concept is equivalent to a comparison of average sampling rates under equal performance requirements \cite{Astrom2002}.
This perspective has for example also been leveraged in \cite{Astrom2002,Henningsson2012,Antunes2016,Antunes2018,Antunes2020,Balaghiinaloo2021,Balaghiinaloo2022}.

The control goal is to keep the described system as close to the origin as possible over time.
Therefore, we define the following cost functional as performance measure
\begin{equation}\label{eq:cost}
	J := \limsup_{M\to \infty} \frac{1}{M} \int_{0}^{M} \E{x(t)^\top x(t)} \diff t,
\end{equation}
which quantifies the long-term average of the expected quadratic deviation of the system state from the origin, or, in other words, the asymptotic average variance of $x(t)$.

Consequently, the input
\begin{equation}\label{eq:input}
	u(t) = -\sum_{k\in\mathbb{N}} \delta(t-t_{k}) x(t_{k})
\end{equation}
is optimal under the considered performance measure \eqref{eq:cost} regardless of the deployed sampling scheme.
We denote by $\delta(\cdot)$ the Dirac impulse and by $(t_k)_{k\in\mathbb{N}_0}$ with $t_0=0$ the sampling instants determined by the respective sampling scheme.
The defined impulsive control input resets the system to the origin at every sampling instant.

Note that this setup has been studied in various flavors in the literature on event-based control, see e.g., \cite{Astrom2002,Rabi2009,Meng2012,Henningsson2012,Goldenshluger2017}. %
We contribute to this collection on fundamental characteristics of event-based control schemes in this standard setup with the following two objectives:
Firstly, we examine the impact of the selected $p$-norm in a well-studied event-triggering scheme - level-triggered sampling - on the performance of the closed loop and, thereby, investigate the following conjecture from \cite[Rem.~1]{Goldenshluger2017}.
\begin{conj}[\cite{Goldenshluger2017}] \label{conj:p_norm}
	Monte Carlo simulations indicate that the performance of level-triggered sampling schemes is independent of the utilized $p$-norm in the triggering rule.
\end{conj}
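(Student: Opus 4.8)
The plan is to establish the conjecture by reducing the cost \eqref{eq:cost} to a single scalar functional of the exit geometry and then arguing that this functional is invariant under the choice of $p$. Because the impulsive input \eqref{eq:input} resets the state to the origin at each sampling instant, every inter-sample interval is an independent copy of the same renewal cycle: starting from $x=0$, the closed loop runs as a driftless standard Wiener process until $x(t)$ leaves the $p$-norm ball of radius $\Delta_p$, whereupon the cycle restarts. By the renewal--reward theorem the cost becomes
\begin{equation}\label{eq:renewal}
	J = \frac{\E{\int_0^{\tau}\! x(t)^\top x(t)\,\diff t}}{\E{\tau}},
\end{equation}
where $\tau$ is the first exit time of the $n$-dimensional Brownian motion from the $p$-ball. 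Since the comparison is to be carried out at a common average sampling rate, i.e.\ at a common value of $\E{\tau}$, I would enforce this by tuning the single free parameter $\Delta_p$ separately for each $p$.

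First I would evaluate the numerator and denominator of \eqref{eq:renewal} in closed form via Dynkin's formula for the generator $\tfrac12\Delta$ of the Wiener process. Using the polynomial test functions $g_1(x)=x^\top x/n$ and $g_2(x)=(x^\top x)^2/(2n+4)$, which satisfy $\tfrac12\Delta g_1=1$ and $\tfrac12\Delta g_2=x^\top x$, I obtain $\E{\tau}=\E{R^2}/n$ and $\E{\int_0^{\tau} x^\top x\,\diff t}=\E{R^4}/(2n+4)$, where $R:=\lVert x(\tau)\rVert_2$ denotes the Euclidean norm of the (random) exit point on the $p$-sphere. Substituting into \eqref{eq:renewal} collapses the whole problem to
\begin{equation}\label{eq:Jratio}
	J = \frac{n}{2n+4}\cdot\frac{\E{R^4}}{\E{R^2}},
\end{equation}
so that, once $\E{\tau}$ (equivalently $\E{R^2}$) is fixed, the conjecture is equivalent to the statement that the ratio $\E{R^4}/\E{R^2}$ does not depend on $p$.

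To close the argument I would aim to prove that this ratio is $p$-invariant, and the cleanest route is to show that, after the rate-matching rescaling of $\Delta_p$, the Euclidean exit norm $R$ is almost surely constant for every $p$: if the Brownian motion always left each $p$-ball at a point of fixed Euclidean distance from the origin, then $\E{R^4}=(\E{R^2})^2$ and \eqref{eq:Jratio} would reduce to the manifestly $p$-independent expression $J=\tfrac{n}{2n+4}\E{R^2}$. The hard part will be establishing exactly this invariance of the exit law. For $p=2$ it is immediate, since the $2$-sphere is a Euclidean sphere and $R\equiv\Delta_2$ by construction; the obstacle is to extend it to $p\neq2$. There the boundary $\{\lVert x\rVert_p=\Delta_p\}$ is no longer a Euclidean sphere, the exit distribution is the non-rotationally-symmetric harmonic measure on that surface, and the Euclidean distance $\lVert x(\tau)\rVert_2$ genuinely fluctuates over the boundary. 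I would therefore have to argue that these fluctuations average out so as to leave $\E{R^4}/\E{R^2}$ unchanged -- an identity that must be proven rather than assumed, and which I expect to be the crux of the entire argument, precisely because the rotational symmetry that makes it transparent for the $2$-norm is no longer available for general $p$.
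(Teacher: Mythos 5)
You are attempting to prove a statement that the paper exists to refute: Conjecture~\ref{conj:p_norm} is quoted from \cite{Goldenshluger2017} precisely so it can be falsified. Proposition~\ref{prop:ET_cost_2} gives $J_2/J_\mathrm{per}(\E\Ttwo)=n/(n+2)<1$ for every $n$, while Theorem~\ref{thm:ET_cost_inf} shows $J_\infty>J_\mathrm{per}(\E\Tinf)$ for all sufficiently large $n$, so the $2$- and $\infty$-norm schemes cannot perform identically. The step you yourself flag as the crux --- that after rate matching the Euclidean exit norm $R=\lVert x(\tau)\rVert_2$ is a.s.\ constant, or at least that $\E{R^4}/\E{R^2}$ is $p$-invariant --- is exactly the step that is false, and in fact your own reduction disproves the conjecture rather than proving it. Your Dynkin computation is sound: it reproduces the identity $Q(\tau)=\E{R(\tau)^4}/(2(n+2))$ that the paper establishes in the proof of Proposition~\ref{prop:sim}, and together with $\E{\tau}=\E{R^2}/n$ it yields $J=\tfrac{n}{2(n+2)}\,\E{R^4}/\E{R^2}$. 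But at a fixed average sampling rate, i.e.\ fixed $\E{R^2}$, Jensen's inequality gives $\E{R^4}\geq(\E{R^2})^2$ with equality if and only if $R^2$ is a.s.\ constant. For $n\geq2$ and $p\neq2$ the exit (harmonic) measure of Brownian motion started at the origin charges every relatively open subset of the $p$-sphere (the $p$-ball is a bounded convex domain, so all boundary points are regular), and $\lVert x\rVert_2$ is non-constant on that surface; hence $R$ is non-degenerate, $\E{R^4}>(\E{R^2})^2$ strictly, and $J_p>J_2$ strictly at equal $\E{\tau}$ --- directly contradicting the claimed $p$-independence. (Only for $n=1$, where all $p$-norms coincide, does the conjecture hold trivially.)

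So the correct conclusion from your machinery is a refutation, and a rather clean one: it shows the $2$-norm is the unique minimizer among $p$-norm level-triggering rules at matched rate, consistent with the paper's remark that the $2$-norm rule is optimal and with the simulated monotone degradation in Fig.~\ref{fig:cost_ratio_wrt_2norm}. Be aware, however, that the paper's Theorem~\ref{thm:ET_cost_inf} proves something strictly stronger for $p=\infty$ than what your Jensen-gap argument delivers: not merely $J_\infty>J_2$, but $J_\infty>J_\mathrm{per}(\E\Tinf)$ for $n\geq n_0$, i.e.\ inconsistency with respect to periodic control. That comparison cannot be read off from the ratio $\E{R^4}/\E{R^2}$ alone; it requires lower-bounding the fluctuations of $\Tinf$ itself, which the paper does via the Gumbel extreme-value limit \eqref{eq:limittheorem} for $2(\ln n)^2(\Tinf-a_n)$ together with the variance semicontinuity bound of Lemma~\ref{lem:ET_inf_var}, giving $\V{\Tinf}\gtrsim (\ln n)^{-4}$ and hence $\E{\int_0^{\Tinf}v_1(t)^2\diff t}>\E{\Tinf}^2/2$ for large $n$.
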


Secondly, the studied $\infty$-norm level-triggering rule coincides with the decentralized level-triggering rule in \cite{Meister2022} whereas we study a non-cooperative instead of a cooperative control goal here.
Due to this connection, we attain new insights on the phenomenon found in \cite{Meister2022} that periodic control outperforms event-based control in a particular distributed consensus setup if the number of agents is large enough.

\section{Performance Analysis}\label{sec:analysis}

In this section, we present our performance analysis results for the previously described setup.
After some preliminaries, we present a performance result for periodic control as a comparison baseline.
Subsequently, we state our performance results for the event-based control schemes considered in Conjecture~\ref{conj:p_norm}. %
Since general performance results cannot be obtained as of now, we consider the two special cases of Euclidean and maximum norm triggering.

\subsection{Preliminaries}

Let us state helpful facts about the considered problem.
\begin{lem}\label{lem:0T}
	Let the inter-event times be independent and identically distributed and let $\E{\tau}<\infty$, where $\tau$ is the inter-event time determined by the sampling scheme.
	Then, the cost \eqref{eq:cost} can be computed according to
	\begin{equation*}
		J(\tau) = \frac{\E{\int_{0}^{\tau} x(t)^\top x(t) \diff t}}{\E\tau},
	\end{equation*}
	where we only need to consider the first sampling interval.
\end{lem}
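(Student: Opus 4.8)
The plan is to recognize the statement as a renewal--reward identity made possible by the impulsive reset in \eqref{eq:input}. First I would record the structural observation that at every sampling instant the input resets the state exactly to the origin, so $x(t_k)=0$ for all $k$, and that between two consecutive samples the control vanishes and \eqref{eq:system} reduces to $\diff x = \diff v$. Hence on each inter-event interval $[t_{k-1},t_k)$ the state is a driftless Wiener process launched afresh from the origin. Together with the i.i.d.\ assumption on the inter-event times $\tau_k := t_k - t_{k-1}$ and the independent-increment property of $v$, this shows that the per-cycle pairs $(\tau_k, R_k)$, with reward $R_k := \int_{t_{k-1}}^{t_k} x(t)^\top x(t)\diff t$, form an i.i.d.\ sequence, and that $\E{R_1} = \E{\int_0^\tau x(t)^\top x(t)\diff t}$ using $t_0=0$. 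This is the renewal--reward structure I would exploit.

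Next I would rewrite the cost as an expected cumulative reward. Since the integrand is nonnegative, Tonelli's theorem lets me pull the expectation outside, so that $\frac{1}{M}\int_0^M \E{x(t)^\top x(t)}\diff t = \frac{1}{M}\E{C(M)}$, where $C(M) := \int_0^M x(t)^\top x(t)\diff t$. Introducing the counting process $N(M) := \max\{k : t_k \le M\}$, I would decompose $C(M)$ into the rewards of the cycles completed before $M$ plus the reward accrued in the single cycle straddling $M$, i.e.\ $C(M) = \sum_{k=1}^{N(M)} R_k + \int_{t_{N(M)}}^{M} x(t)^\top x(t)\diff t$.

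The core step is a Wald-type evaluation of the expected sum. I would work with $\nu := N(M)+1 = \min\{k : t_k > M\}$, which, unlike $N(M)$, is a stopping time for the filtration generated by the cycles (the event $\{\nu \ge k\} = \{t_{k-1}\le M\}$ depends only on cycles $1,\dots,k-1$ and is therefore independent of the $k$-th reward). Wald's identity then gives $\E{\sum_{k=1}^{\nu} R_k} = \E{\nu}\,\E{R_1}$. Bounding the straddling contribution by the full reward of its cycle, $\int_{t_{N(M)}}^{M} x(t)^\top x(t) \diff t \le R_\nu$, I obtain the sandwich
\[
\E{\nu}\,\E{R_1} - \E{R_\nu} \;\le\; \E{C(M)} \;\le\; \E{\nu}\,\E{R_1}.
\]
Dividing by $M$ and invoking the elementary renewal theorem, $\E{\nu}/M \to 1/\E{\tau}$, would then pin the limit of $\E{C(M)}/M$ to $\E{R_1}/\E{\tau} = J(\tau)$, so that the $\limsup$ in \eqref{eq:cost} is in fact an ordinary limit.

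The main obstacle I anticipate is controlling the boundary term $\E{R_\nu}/M$, i.e.\ the expected reward in the size-biased cycle that contains the horizon $M$ (the renewal ``inspection paradox''). Showing that this term vanishes after division by $M$ is the only nonroutine point and needs a moment estimate beyond $\E{R_1}<\infty$, for instance a uniform-integrability or second-moment bound on the per-cycle rewards; such bounds are available for the specific driftless-Wiener cycles at hand because the inter-event times considered in this paper have finite moments. With that estimate in place the sandwich closes and the identity follows.
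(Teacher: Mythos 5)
Your proposal is correct in substance, but note that the paper itself gives no proof to compare against: it simply defers to \cite[Lem.~1]{Meister2023} and omits the argument for space reasons. What you have written is the standard self-contained renewal--reward derivation, and all the key steps are sound: the impulsive reset in \eqref{eq:input} together with the independent increments of $v$ makes the cycle pairs $(\tau_k,R_k)$ i.i.d.; Tonelli justifies $\int_0^M \E{x(t)^\top x(t)}\diff t = \E{C(M)}$; $\nu=N(M)+1$ is a legitimate stopping time with $\{\nu\ge k\}=\{t_{k-1}\le M\}$ independent of the $k$-th cycle, so Wald's identity (valid for nonnegative summands even without integrability assumptions) gives the sandwich; and the elementary renewal theorem supplies $\E{\nu}/M\to 1/\E{\tau}$. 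The one place where you are overly cautious is the boundary term: controlling $\E{R_\nu}/M$ does \emph{not} require second moments or uniform integrability beyond $\E{R_1}<\infty$. The classical argument (e.g., Ross's proof of the renewal--reward theorem in expectation) writes $g(M):=\E{R_{N(M)+1}}$ as the solution of the renewal equation $g(M)=h(M)+\int_0^M g(M-s)\diff F(s)$ with $h(M)=\E{R_1\mathbf{1}_{\tau_1>M}}\to 0$, and Blackwell's renewal theorem then yields $g(M)/M\to 0$ using only $\E{R_1}<\infty$ and $\E{\tau}<\infty$. (Alternatively, your stronger moment bounds do hold in this paper's setting, since for $p$-norm level triggering the state is bounded up to the triggering instant and $\tau_p$ is dominated by the exit time of a bounded interval for a single coordinate, which has finite moments of all orders; so your route also closes.) A residual edge case your write-up shares with the lemma as stated is $\E{R_1}=\infty$, where the identity should be read as both sides being infinite; this does not arise for the sampling schemes the lemma is applied to.
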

\begin{proof}
	The proof follows along the lines of \cite[Lem.~1]{Meister2023} and is omitted due to space limitations.
\end{proof}

Let us define $Q(\tau) := \E{\int_{0}^{\tau} x(t)^\top x(t) \diff t}$.
\begin{lem}\label{lem:Q}
	Let $\tau$ be independent of the direction, i.e., $\tau$ does not change if $v_i$ is interchanged with $v_j$ for any $i,j\in\lbrace1,\dots,n\rbrace$.
	Then, given Lemma~\ref{lem:0T}, we can establish
	\begin{equation*}
		Q(\tau) = n \, \E{\int_{0}^{\tau} v_1(t)^2 \diff t}.
	\end{equation*}
\end{lem}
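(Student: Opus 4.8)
The plan is to reduce the quadratic state cost on the first interval to the driving Wiener process itself, and then to peel off the factor $n$ through a coordinate-permutation symmetry.

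First I would note that, by Lemma~\ref{lem:0T}, it suffices to work on the first sampling interval $[0,\tau]$. On this interval the impulsive input \eqref{eq:input} acts only at $t_0=0$, where it resets the state to the origin; afterwards the dynamics \eqref{eq:system} reduce to $\diff x(t)=\diff v(t)$ with $x(0)=0$. Since $v(0)=0$ for a standard Wiener process, this yields $x(t)=v(t)$ on the interval, so that $x(t)^\top x(t)=\sum_{i=1}^{n} v_i(t)^2$. Linearity of the expectation and of the integral then gives
\[
  Q(\tau)=\E{\int_{0}^{\tau}\sum_{i=1}^{n} v_i(t)^2 \diff t}=\sum_{i=1}^{n}\E{\int_{0}^{\tau} v_i(t)^2 \diff t},
\]
and it remains to show that each summand equals the $i=1$ term.

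The crux of the argument is the directional-symmetry reduction, and this is the step I expect to require the most care. Let $\sigma_i$ denote the transformation of the driving noise that interchanges the coordinates $v_1$ and $v_i$. Because the components of a standard Wiener process are i.i.d., $\sigma_i$ is measure-preserving, i.e.\ $\sigma_i v$ has the same law as $v$; by the hypothesis of the lemma it also leaves $\tau$ unchanged, so $\tau(\sigma_i v)=\tau(v)$. Using the distributional identity $v\stackrel{d}{=}\sigma_i v$ inside the expectation and observing that the $i$-th coordinate of $\sigma_i v$ is $v_1$, I obtain
\[
  \E{\int_{0}^{\tau} v_i(t)^2 \diff t}=\E{\int_{0}^{\tau(\sigma_i v)} (\sigma_i v)_i(t)^2 \diff t}=\E{\int_{0}^{\tau} v_1(t)^2 \diff t}
\]
for every $i$. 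Substituting this back collapses the sum and produces $Q(\tau)=n\,\E{\int_{0}^{\tau} v_1(t)^2 \diff t}$, as claimed. The only genuine difficulty is to make this change-of-variables argument precise — ensuring that the stopping time in the upper integration limit and the integrand transform consistently under the swap — since no explicit evaluation of the integral is needed for the result.
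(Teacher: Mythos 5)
Your proposal is correct and follows essentially the same route as the paper's own proof: reduce $x(t)$ to $v(t)$ on the first interval, expand $x^\top x$ as $\sum_i v_i^2$, and invoke the directional symmetry of $\tau$ to equate all $n$ summands. The paper compresses the permutation argument into the single remark that ``$\tau$ is independent of the direction,'' whereas you spell out the measure-preserving swap explicitly; the content is the same.
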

\vspace*{\belowdisplayskip}
\begin{proof}
	We have
	\begin{align*}
		Q(\tau) & = \E{\int_0^{\tau} \sum_{i=1}^n x_i^2 \diff t} = \E{\int_0^{\tau} \sum_{i=1}^n v_i^2 \diff t} \\
		        & = n \, \E{\int_0^{\tau} v_1(t)^2  \diff t},
	\end{align*}
	where we used that $\tau$ is independent of the direction in the last step.
\end{proof}

\begin{remark}
	The assumption formulated in Lemma~\ref{lem:Q} is satisfied for the considered sampling schemes in this paper.
\end{remark}

\subsection{Periodic Control}\label{sec:TT}

For periodic control, the inter-event time is a user-defined constant $\TTT = t_{k+1} - t_k = \mathrm{const}$.
As this sequence of inter-event times is indeed independent and identically distributed as well as independent of the direction in the sense of Lemma~\ref{lem:Q}, we are able to state the following result.
\begin{prop}\label{prop:TT_cost}
	Suppose system \eqref{eq:system} is controlled by the control input \eqref{eq:input} with constant inter-event times $\TTT$.
	Then, the cost \eqref{eq:cost} can be expressed as
	\begin{equation*}
		J_{\mathrm{per}}(\TTT) = n \frac{\TTT}{2}.
	\end{equation*}
\end{prop}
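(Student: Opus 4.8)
The plan is to reduce the cost to an elementary second-moment computation for Brownian motion by chaining the two preliminary lemmas. First I would observe that the constant inter-event sequence of periodic control is i.i.d.\ with $\E{\tau}=\TTT<\infty$ and is trivially direction-independent, so both Lemma~\ref{lem:0T} and Lemma~\ref{lem:Q} apply. Combining them gives $J_{\mathrm{per}}(\TTT)=Q(\TTT)/\E{\tau}=n\,\E{\int_{0}^{\TTT} v_1(t)^2 \diff t}/\TTT$, which isolates the single remaining quantity to evaluate.

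Next I would identify the state trajectory on the first sampling interval. Since the impulsive input \eqref{eq:input} resets the state to the origin at $t_0=0$ and no further impulse occurs before $\TTT$, the dynamics \eqref{eq:system} reduce to $x(t)=v(t)$ on $[0,\TTT)$, so the relevant coordinate $x_1$ is a standard scalar Wiener process starting at the origin. Consequently $\E{v_1(t)^2}=t$, since a standard Wiener process has zero mean and variance $t$.

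Then I would interchange expectation and time integration. Because the integrand $v_1(t)^2$ is nonnegative, Tonelli's theorem justifies $\E{\int_{0}^{\TTT} v_1(t)^2 \diff t}=\int_{0}^{\TTT} \E{v_1(t)^2} \diff t=\int_{0}^{\TTT} t \diff t=\TTT^2/2$. Substituting back yields $J_{\mathrm{per}}(\TTT)=n(\TTT^2/2)/\TTT=n\TTT/2$, as claimed.

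I do not anticipate a genuine obstacle, since this proposition merely furnishes the comparison baseline; the only point requiring any care is the interchange of expectation and integral, which the nonnegativity of the integrand settles immediately via Tonelli.
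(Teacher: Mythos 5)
Your proposal is correct and follows essentially the same route as the paper: invoke Lemmas~\ref{lem:0T} and \ref{lem:Q} for the constant (hence trivially i.i.d.\ and direction-independent) inter-event times, compute $Q(\TTT)=n\int_0^{\TTT}\E{v_1(t)^2}\diff t=n\TTT^2/2$, and divide by $\TTT$. Your additional remarks (the reset giving $x(t)=v(t)$ on the first interval, and Tonelli justifying the interchange of expectation and integral) only make explicit steps the paper leaves implicit.
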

\vspace*{\belowdisplayskip}
\begin{proof}
	As the inter-event times $\TTT$ are constant, Lemmas~\ref{lem:0T} and \ref{lem:Q} apply.
	Thus, we arrive at
	\begin{align*}
		Q(\TTT) & = n \int_{0}^{\TTT} \E{v_1(t)^2} \diff t            \\
		        & = n \int_{0}^{\TTT} t \diff t = n \frac{\TTT^2}{2},
	\end{align*}
	which we can use in $J_{\mathrm{per}}(\TTT) = Q(\TTT)/\TTT$ to arrive at the desired expression for \eqref{eq:cost}.
\end{proof}

\begin{remark}
	Note that this result is in line with the ones found in \cite{Astrom2002,Rabi2009,Blind2011}, scaled by the state dimension.
\end{remark}

The found cost expression will serve as a baseline for the event-based control schemes analyzed in the following section.

\subsection{Event-Based Control}\label{sec:ET}

For this paper, let us consider $p$-norm level-triggering rules of the form
\begin{equation}\label{eq:ET_cond}
	\lVert x(t) \rVert{}_p \geq \Delta_p
\end{equation}
where $\Delta_p>0$ is a constant and $p\in[1,\infty]$.
This is motivated by Conjecture~\ref{conj:p_norm}.
In addition, note that the decentralized triggering rule considered in \cite{Meister2022} for a consensus problem is equal to \eqref{eq:ET_cond} with $p=\infty$.

The inter-event times resulting from this type of triggering rule can be defined as a stopping time
\begin{equation*}
	\tau_p(\Delta_p) = \inf\{ t>0 \;:\; \lVert x(t) \rVert{}_p \geq \Delta_p \}.
\end{equation*}
Note that all inter-event times are independent and identically distributed.
Given the considered type of triggering rule, we can establish the following lemma.
\begin{lem}\label{lem:scaling}
	Given the triggering condition \eqref{eq:ET_cond}, the following scaling relationships hold
	\begin{align*}
		Q_p(\Delta_p)        & = \Delta_p^4 Q_p(1),        \\
		\E{\tau_p(\Delta_p)} & = \Delta_p^2 \E{\tau_p(1)}, %
	\end{align*}
	where $Q_p(\Delta_p) := Q(\tau_p(\Delta_p))$.
\end{lem}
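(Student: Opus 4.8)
The plan is to exploit the self-similarity (Brownian scaling) of the Wiener process. The central observation is that, on the first sampling interval following the reset at the origin, the closed loop obeys $x(t) = v(t)$, since the impulsive input \eqref{eq:input} acts only at the sampling instants and leaves $u \equiv 0$ in between. Hence both $\tau_p(\Delta_p)$ and $Q_p(\Delta_p)$ are functionals of a standard $n$-dimensional Wiener process started at the origin, and the scaling law of Brownian motion can be brought to bear. By Lemma~\ref{lem:0T} only this first interval matters, so it suffices to relate the unit-threshold problem to the $\Delta_p$-threshold problem pathwise.

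First I would recall the scaling property: for any $c>0$, the process $w(t) := c^{-1} v(c^2 t)$ is again a standard $n$-dimensional Wiener process. Applying this with $c = \Delta_p$ gives the pathwise representation $v(s) = \Delta_p \, w(s/\Delta_p^2)$. Because every $p$-norm is positively homogeneous of degree one, $\lVert v(s) \rVert_p = \Delta_p \, \lVert w(s/\Delta_p^2) \rVert_p$, so the triggering event $\lVert v(s) \rVert_p \geq \Delta_p$ is equivalent to $\lVert w(s/\Delta_p^2) \rVert_p \geq 1$. Reading off the first passage times under the time change $t = s/\Delta_p^2$ then yields $\tau_p(\Delta_p) = \Delta_p^2 \, \tau_p(1)$ in distribution, where $\tau_p(1)$ is the first passage time of the unit-threshold problem expressed through $w$. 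Taking expectations gives the second claimed identity $\E{\tau_p(\Delta_p)} = \Delta_p^2 \, \E{\tau_p(1)}$.

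For the first identity I would substitute the same representation into $Q_p(\Delta_p) = \E{\int_0^{\tau_p(\Delta_p)} v(s)^\top v(s) \diff s}$, using that the cost integrand $x^\top x = \lVert v \rVert_2^2$ is homogeneous of degree two. This produces a factor $\Delta_p^2$ from the integrand, while the change of variables $u = s/\Delta_p^2$ transforms the upper limit $\tau_p(\Delta_p)$ into $\tau_p(1)$ and contributes a further factor $\Delta_p^2$ from $\diff s = \Delta_p^2 \diff u$. Collecting both factors and taking expectations delivers $Q_p(\Delta_p) = \Delta_p^4 \, Q_p(1)$.

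I expect the main subtlety to be the careful bookkeeping of the random-horizon time change rather than any deep difficulty. One must verify that the stopping time transforms consistently, i.e., that the first passage of $w$ to the unit level coincides in law with $\Delta_p^{-2} \tau_p(\Delta_p)$, and that the substitution inside the integral over a random interval is legitimate. The latter is justified because, for each fixed Brownian path, the inner integral is an ordinary Lebesgue integral with a deterministic substitution, and the nonnegativity of the integrand permits interchanging expectation and integration via Tonelli's theorem. Notably, no step relies on $p$ beyond the degree-one homogeneity of the norm, which is precisely why these scaling laws hold uniformly for all $p \in [1,\infty]$.
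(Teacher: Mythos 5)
Your proposal is correct and follows essentially the same route as the paper's proof: both rest on the self-similarity of the Wiener process ($v(s) = \Delta_p\, w(s/\Delta_p^2)$ with $w$ again standard), the degree-one homogeneity of the $p$-norm to rescale the triggering threshold to $1$, and a linear substitution $u = s/\Delta_p^2$ in the cost integral, yielding the factors $\Delta_p^2$ and $\Delta_p^4$. The only cosmetic difference is that the paper first reduces the integrand to the single component $v_1(s)^2$ via Lemma~\ref{lem:Q} before scaling, while you scale the full integrand $v(s)^\top v(s)$ directly; the argument is otherwise identical.
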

\begin{proof}
	See Appendix~\ref{app:proof_scaling}.
\end{proof}

The scaling relationships allow us to concentrate on the case $\Delta_p=1$ for the performance analysis of the triggering rules \eqref{eq:ET_cond}.
This is due to the following lemma which utilizes the shorthand $J_p(\Delta_p):=J(\tau_p(\Delta_p))$.
\begin{lem}\label{lem:DeltaEq1}
	Given the assumptions of Proposition~\ref{prop:TT_cost} and triggering condition \eqref{eq:ET_cond}, the following holds
	\begin{equation*}
		\frac{J_p(\Delta_p)}{J_\mathrm{per}(\E{\tau_p(\Delta_p)})} = \frac{J_p(1)}{J_\mathrm{per}(\E{\tau_p(1)})}.
	\end{equation*}
\end{lem}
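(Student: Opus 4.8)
The plan is to substitute the scaling relationships of Lemma~\ref{lem:scaling} into both the numerator and the denominator of the left-hand ratio and then observe that a common factor of $\Delta_p^2$ cancels. The whole argument reduces to the fact that the event-based cost and the periodic baseline evaluated at the matched sampling rate each scale quadratically in $\Delta_p$, so their quotient is $\Delta_p$-independent.

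First I would rewrite the event-based cost. By Lemma~\ref{lem:0T} we have $J_p(\Delta_p) = Q_p(\Delta_p)/\E{\tau_p(\Delta_p)}$. Inserting $Q_p(\Delta_p)=\Delta_p^4 Q_p(1)$ and $\E{\tau_p(\Delta_p)}=\Delta_p^2\E{\tau_p(1)}$ from Lemma~\ref{lem:scaling}, the exponents subtract and yield $J_p(\Delta_p)=\Delta_p^2\,Q_p(1)/\E{\tau_p(1)}=\Delta_p^2\,J_p(1)$.

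Second I would treat the denominator. By Proposition~\ref{prop:TT_cost} the periodic cost $J_\mathrm{per}(\TTT)=n\TTT/2$ is linear in its argument, so evaluating it at $\TTT=\E{\tau_p(\Delta_p)}=\Delta_p^2\E{\tau_p(1)}$ and pulling out the scalar gives $J_\mathrm{per}(\E{\tau_p(\Delta_p)})=\Delta_p^2\,J_\mathrm{per}(\E{\tau_p(1)})$.

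Finally, dividing the two relations, the common factor $\Delta_p^2$ cancels and the claimed identity follows immediately. The only point requiring care — and the closest thing to an obstacle — is that $J_p$ is \emph{not} invariant under rescaling $\Delta_p$ but rather scales like $\Delta_p^2$; the ratio is nonetheless constant precisely because the periodic baseline, through its linearity in the sampling period, acquires exactly the same quadratic factor when evaluated at the matched average rate $\E{\tau_p(\Delta_p)}$. Keeping the differing exponents of $Q_p$ (fourth power) and $\E{\tau_p}$ (second power) straight is therefore the essential bookkeeping step, after which the result is a one-line cancellation.
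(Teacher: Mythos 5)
Your proposal is correct and matches the paper's own proof essentially verbatim: both derive $J_p(\Delta_p)=\Delta_p^2 J_p(1)$ from Lemmas~\ref{lem:0T} and \ref{lem:scaling}, obtain $J_\mathrm{per}(\E{\tau_p(\Delta_p)})=\Delta_p^2 J_\mathrm{per}(\E{\tau_p(1)})$ from the linearity in Proposition~\ref{prop:TT_cost}, and cancel the common factor. No gaps; your version simply writes out the intermediate steps more explicitly.
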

\vspace*{\belowdisplayskip}
\begin{proof}
	Utilizing Lemma~\ref{lem:scaling} with Proposition~\ref{prop:TT_cost} and Lemma~\ref{lem:0T} yields
	\begin{align*}
		J_p(\Delta_p)                        & = \Delta_p^2 J_p(1),                        \\
		J_\mathrm{per}(\E{\tau_p(\Delta_p)}) & = \Delta_p^2 J_\mathrm{per}(\E{\tau_p(1)}),
	\end{align*}
	which allows us to cancel $\Delta_p$ from the considered ratio.
\end{proof}

Thus, the properties shown for $J_p(1) / J_\mathrm{per}(\E{\tau_p(1)})$ also hold for the respective ratio with any $\Delta_p>0$.
For the remainder of this work, we will thus omit the argument $\Delta_p$ whenever a result holds for any choice of $\Delta_p>0$.

Let us now proceed with our analysis for two specific choices of $p$, namely $p=2$ and $p=\infty$.
The $2$-norm case has already been studied in the literature, and we present reformulated results such that a comparison with periodic control and other $p$-norm level-triggering schemes becomes possible.
For the $\infty$-norm case, we deduce new theoretical findings and put them into context with the presented 2-norm results and Conjecture~\ref{conj:p_norm}.

\subsubsection{Euclidean Norm Triggering}

Utilizing the Euclidean or $2$-norm for triggering results in the inter-event times
\begin{equation}\label{eq:ET_IET_2}
	\Ttwo(\Delta_2) = \inf\lbrace t>0 \;:\; \lVert x(t) \rVert{}_2 \geq \Delta_2 \rbrace.
\end{equation}
As laid out in the introduction, this case is well-studied in the literature, see e.g., \cite{Astrom2002,Meng2012,Henningsson2012,Goldenshluger2017}.
Let us now leverage existing results to deduce performance characteristics that we can use in our intended comparison.

Following the arguments in \cite[Paper~II]{Henningsson2012} or \cite{Goldenshluger2017}, we arrive at a closed-form solution for the cost ratio of 2-norm level-triggering and periodic control.
\begin{prop}\label{prop:ET_cost_2}
	Given the triggering rule in \eqref{eq:ET_IET_2}, we have
	\begin{equation*}
		\frac{J_2}{J_\mathrm{per}(\E{\Ttwo})} = \frac{n}{n+2},
	\end{equation*}
	where $J_\mathrm{per}(\E\Ttwo)$ denotes the periodic control cost with $\TTT = \E\Ttwo$.
\end{prop}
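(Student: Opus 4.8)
The plan is to exploit that, between two consecutive sampling instants, the impulsive input \eqref{eq:input} vanishes and the state has just been reset to the origin, so on the first interval $x(t)=v(t)$ is an $n$-dimensional standard Brownian motion started at $0$. Consequently $\Ttwo$ in \eqref{eq:ET_IET_2} is precisely the first exit time of $v$ from the Euclidean ball of radius $\Delta_2$, and $Q_2(\Delta_2)=\E{\int_0^{\Ttwo}\lVert v(t)\rVert_2^2\diff t}$. By Lemma~\ref{lem:DeltaEq1} it suffices to treat $\Delta_2=1$, and by Lemma~\ref{lem:0T} together with Proposition~\ref{prop:TT_cost} the target ratio reduces to
\[
	\frac{J_2}{J_\mathrm{per}(\E\Ttwo)}=\frac{Q_2(1)/\E\Ttwo}{n\,\E\Ttwo/2}=\frac{2\,Q_2(1)}{n\,(\E\Ttwo)^2},
\]
so it remains only to compute the two scalars $\E\Ttwo$ and $Q_2(1)$ for exit from the unit ball.

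For the mean exit time I would apply optional stopping to the martingale $\lVert v(t)\rVert_2^2-n\,t$, whose drift term $n\,t$ arises from the quadratic variation of the $n$ independent coordinates. Since the ball is bounded, $\Ttwo$ has finite expectation and the stopped process is uniformly integrable, hence $\E{\lVert v(\Ttwo)\rVert_2^2}=n\,\E\Ttwo$. Continuity of the paths forces $\lVert v(\Ttwo)\rVert_2=1$, which yields $\E\Ttwo=1/n$.

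The accumulated cost $Q_2(1)=\E{\int_0^{\Ttwo}\lVert v\rVert_2^2\diff t}$ I would obtain from Dynkin's formula: if $h$ solves the Poisson problem $\tfrac12\Delta h=-\lVert v\rVert_2^2$ on the unit ball with $h=0$ on the boundary, then $Q_2(1)=h(0)$. By rotational symmetry $h$ depends only on $r=\lVert v\rVert_2$ and solves $h''+\tfrac{n-1}{r}h'=-2r^2$; the quartic ansatz $h(r)=A\,r^4+C$ gives $A=-\tfrac{1}{2(n+2)}$ and, via $h(1)=0$, $C=\tfrac{1}{2(n+2)}$, so $Q_2(1)=h(0)=\tfrac{1}{2(n+2)}$. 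Substituting $\E\Ttwo=1/n$ and $Q_2(1)=1/(2(n+2))$ into the displayed ratio then produces $\tfrac{2\,Q_2(1)}{n\,(\E\Ttwo)^2}=\tfrac{1/(n+2)}{1/n}=\tfrac{n}{n+2}$, as claimed.

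I expect the only genuinely delicate step to be the computation of $Q_2(1)$ through this boundary value problem: one must justify that the deterministic solution $h$ actually represents the stochastic integral, i.e., integrability of $h(v(t))$ and validity of Dynkin's formula up to the stopping time $\Ttwo$, after which solving the radial ODE is routine. The mean-exit-time argument and the final assembly are standard. Alternatively, both scalars can be read off directly from the Bessel-process and first-passage computations in \cite[Paper~II]{Henningsson2012} and \cite{Goldenshluger2017}, which is the route indicated by the text preceding the statement.
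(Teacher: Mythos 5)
Your proof is correct, but it takes a genuinely different route from the paper's. The paper disposes of this proposition in one line by citing \cite[Prop.~2]{Goldenshluger2017} with $\Sigma_v,\Sigma_u\rightarrow0$; you instead give a self-contained derivation from standard stochastic calculus. Both of your ingredients check out: optional stopping applied to the martingale $\lVert v(t)\rVert_2^2-nt$ gives $\E\Ttwo=1/n$ (uniform integrability is immediate since $\lVert v(t\wedge\Ttwo)\rVert_2^2\leq 1$ and $n(t\wedge\Ttwo)\leq n\Ttwo$ with $\E\Ttwo<\infty$), and the radial Poisson problem gives $Q_2(1)=1/(2(n+2))$: indeed $h(r)=\tfrac{1}{2(n+2)}(1-r^4)$ solves $h''+\tfrac{n-1}{r}h'=-2r^2$ with $h(1)=0$, and the validity of Dynkin's formula — the step you flag as delicate — is routine here because $h$ is a polynomial and both $h(v(t))$ and $\lVert v(t)\rVert_2^2$ are bounded up to $\Ttwo$. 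Assembling via Lemma~\ref{lem:0T}, Proposition~\ref{prop:TT_cost}, and Lemma~\ref{lem:DeltaEq1} then yields $n/(n+2)$ as claimed. It is worth noting that your Poisson-problem computation is essentially the identity the paper itself proves later for Proposition~\ref{prop:sim} (Dynkin's formula with $f(x)=x^4$ applied to the Bessel process $R(t)=\lVert v(t)\rVert_2$); that proposition, applied to $\tau=\Ttwo(1)$ where $R(\Ttwo)=1$ almost surely and $\E\Ttwo=1/n$, gives the ratio $\frac{1}{n(n+2)(1/n)^2}=\frac{n}{n+2}$ immediately. What your approach buys is self-containedness: no appeal to the external reference and no limiting argument in the noise covariances. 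What the paper's citation buys is brevity, at the cost of leaving the actual computation opaque to the reader.
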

\begin{proof}
	The result follows, for example, from \cite[Prop.~2]{Goldenshluger2017} with $\Sigma_v,\Sigma_u\rightarrow0$.
	We omit the details here.
\end{proof}

\begin{remark}
	We additionally know from \cite[Paper~II]{Henningsson2012} that, for a fixed average sampling rate, the 2-norm level-triggering rule is optimal with respect to \eqref{eq:cost}.
	Given this result, we can reformulate Conjecture~\ref{conj:p_norm} as the hypothesis that any $p$-norm utilized in the considered triggering condition is supposedly optimal in the sense that the resulting triggering rule minimizes \eqref{eq:cost} for a fixed $\E{\tau_p}$.
\end{remark}

\subsubsection{Maximum Norm Triggering}

Let us now examine Conjecture~\ref{conj:p_norm} for the $\infty$- or maximum norm.
The corresponding inter-event time can be formulated as
\begin{equation}\label{eq:ET_IET_inf}
	\Tinf(\Delta_\infty) = \inf\lbrace t>0 \;:\; \lVert x(t) \rVert{}_\infty \geq \Delta_\infty \rbrace.
\end{equation}
Before we can arrive at a result on the performance relationship between event-based and periodic control within this setting, we require the following lemma.
\begin{lem}\label{lem:ET_inf_var}
	Let $Z, (Z_n)$ be a random variable and a sequence of random variables, respectively.
	Moreover, let $(Z_n)$ converge weakly to $Z$ and $\V Z < \infty$, where $\V\cdot$ denotes the variance.
	Then,
	\begin{equation*}
		\liminf_{n\to\infty} \V{Z_n} \geq \V Z.
	\end{equation*}
\end{lem}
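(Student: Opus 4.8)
The plan is to exploit a translation-invariant representation of the variance through independent copies, which neatly circumvents the main difficulty, namely that weak convergence does not control the means $\E{Z_n}$. Concretely, for any square-integrable random variable $W$ with an independent copy $W'$ one has the identity $\V{W} = \tfrac12 \E{(W-W')^2}$, since $\E{(W-W')^2} = 2\E{W^2} - 2(\E{W})^2$. I would apply this both to $Z$, which is legitimate because $\V Z < \infty$, and to each $Z_n$; if some $\V{Z_n} = \infty$ the identity still holds with both sides infinite, and such terms can only help the claimed inequality, so there is no loss in assuming the relevant $\liminf$ is finite.

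First I would introduce, on a suitable product space, an independent copy $Z_n'$ of $Z_n$ for each $n$ and an independent copy $Z'$ of $Z$. Using characteristic functions together with the multivariate Lévy continuity theorem, the independence of each pair and the hypothesis $Z_n \Rightarrow Z$ give joint weak convergence $(Z_n, Z_n') \Rightarrow (Z, Z')$, because the characteristic function of $(Z_n, Z_n')$ factors as $\phi_{Z_n}(s)\,\phi_{Z_n}(t) \to \phi_Z(s)\,\phi_Z(t)$, which is exactly the characteristic function of the independent pair $(Z, Z')$. Applying the continuous mapping theorem to the map $(a,b) \mapsto (a-b)^2$ then yields $(Z_n - Z_n')^2 \Rightarrow (Z-Z')^2$.

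The remaining ingredient is the standard lower-semicontinuity principle for nonnegative weak limits: if $W_n \geq 0$ and $W_n \Rightarrow W$, then $\liminf_n \E{W_n} \geq \E{W}$. I would establish this from the layer-cake representation $\E{W_n} = \int_0^\infty \Prob(W_n > t)\diff t$, the portmanteau inequality $\liminf_n \Prob(W_n > t) \geq \Prob(W > t)$ for the open half-line $(t,\infty)$, and Fatou's lemma applied in the variable $t$. Invoking this with $W_n = (Z_n - Z_n')^2$ and $W = (Z - Z')^2$, and then halving, gives $\liminf_n \V{Z_n} = \tfrac12 \liminf_n \E{(Z_n - Z_n')^2} \geq \tfrac12 \E{(Z-Z')^2} = \V Z$, which is the assertion.

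The step I expect to be the crux — and the very reason for the symmetrization — is that $\E{Z_n}$ need not converge to $\E Z$, nor even remain bounded, under mere weak convergence; a naive attempt to recenter each $Z_n$ at its mean and then apply Fatou would therefore break down precisely where one needs control. Representing the variance through the translation-invariant quantity $(Z-Z')^2$ eliminates any dependence on the means and reduces the lemma to the robust nonnegative lower-semicontinuity fact above, so the only care required is bookkeeping for the possibly-infinite terms, which I handle by passing to the subsequence realizing the $\liminf$.
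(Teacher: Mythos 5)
Your proof is correct, and it takes a genuinely different route from the paper's. The paper proceeds by truncation: it saturates each variable at a level $M$, uses the fact that a $1$-Lipschitz transform cannot increase variance to get $\V{Z_n^M}\leq\V{Z_n}$, applies the continuous mapping theorem to obtain weak convergence of $Z_n^M$ to $Z^M$, exploits boundedness to conclude $\V{Z_n^M}\to\V{Z^M}$, and finally sends $M\to\infty$, which is precisely where the hypothesis $\V{Z}<\infty$ enters. You instead symmetrize: $\V{W}=\tfrac{1}{2}\E{(W-W')^2}$ for an independent copy $W'$, joint weak convergence of $(Z_n,Z_n')$ to $(Z,Z')$ via the multivariate L\'evy continuity theorem, continuous mapping for $(a,b)\mapsto(a-b)^2$, and the Fatou-type lower semicontinuity $\liminf_n\E{W_n}\geq\E{W}$ for nonnegative $W_n$ converging weakly to $W$, which you correctly reduce to the portmanteau inequality on open half-lines, the layer-cake formula, and Fatou's lemma in the layer variable. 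Each step is sound, including the handling of possibly infinite $\V{Z_n}$: when $\E{Z_n^2}=\infty$, conditioning on $Z_n'$ shows $\E{(Z_n-Z_n')^2}=\infty$ as well (for fixed $w'$ one has $\E{(Z_n-w')^2}\geq\E{Z_n^2\,;\,|Z_n|>2|w'|}/4=\infty$), so the identity holds in $[0,\infty]$ and your chain of (in)equalities actually needs no case distinction or subsequence extraction at all. Comparing the two approaches: yours replaces the paper's double limit (first $n\to\infty$, then $M\to\infty$) by a single passage to the limit, and it never truly uses $\V{Z}<\infty$, so it proves the lemma even when $\V{Z}=\infty$ under the convention that non-square-integrable variables have infinite variance; moreover, the $1$-Lipschitz variance-reduction fact that the paper cites is itself most cleanly proved by your symmetrization identity, so in a sense your argument isolates the mechanism underlying the paper's. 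What the paper's route buys in exchange is elementarity: it stays with bounded random variables throughout and requires neither characteristic functions nor a product-space construction, only the continuous mapping theorem and dominated convergence.
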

\vspace*{\belowdisplayskip}
\begin{proof}
	See Appendix~\ref{app:proof_ET_inf_var}
\end{proof}

Given this lemma, we can now show the following result.

\begin{thm}\label{thm:ET_cost_inf}
	Given that system \eqref{eq:system} is controlled with the impulsive control input \eqref{eq:input} under the triggering rule in \eqref{eq:ET_IET_inf}, there exists an $n_0$ such that, for all $n\geq n_0$, we have
	\begin{equation*}
		J_\mathrm{per}(\E\Tinf) < J_\infty
	\end{equation*}
	where $J_\mathrm{per}(\E\Tinf)$ denotes the periodic control cost with $\TTT = \E\Tinf$.
\end{thm}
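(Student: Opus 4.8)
The plan is to convert the comparison into a single variance inequality and then exploit the extreme-value behaviour of the maximum-norm exit time. By Lemma~\ref{lem:DeltaEq1} it suffices to treat $\Delta_\infty=1$. On the first inter-event interval the impulsive control \eqref{eq:input} places the state at the origin, so $x(t)=v(t)$ and
\begin{equation*}
	\Tinf=\inf\{t>0:\ \max_{1\le i\le n}|v_i(t)|\ge 1\}=\min_{1\le i\le n}\tau_i,\qquad \tau_i:=\inf\{t>0:\ |v_i(t)|\ge1\},
\end{equation*}
with $\tau_1,\dots,\tau_n$ independent and identically distributed. I would first record two consequences of Dynkin's formula: since $\tfrac12\Delta\|x\|_2^2=n$ and $\tfrac12\Delta\|x\|_2^4=2(n+2)\|x\|_2^2$, and since every non-triggering coordinate satisfies $|v_i(\Tinf)|\le1$ so that $\|v(\Tinf)\|_2^2\in[1,n]$ is bounded, optional stopping is justified and yields $\E{\|v(\Tinf)\|_2^2}=n\E\Tinf$ and $\E{\int_0^{\Tinf}\|v\|_2^2\diff t}=\E{\|v(\Tinf)\|_2^4}/(2(n+2))$. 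Combining these with Lemmas~\ref{lem:0T} and~\ref{lem:Q} and Proposition~\ref{prop:TT_cost} gives the closed form
\begin{equation*}
	\frac{J_\infty}{J_\mathrm{per}(\E\Tinf)}=\frac{\E{\|v(\Tinf)\|_2^4}}{n(n+2)(\E\Tinf)^2}=\frac{n}{n+2}\left(1+\frac{\V{\|v(\Tinf)\|_2^2}}{n^2(\E\Tinf)^2}\right),
\end{equation*}
which for $p=2$ (where $\|v(\Ttwo)\|_2^2\equiv1$) reproduces Proposition~\ref{prop:ET_cost_2}. Hence $J_\mathrm{per}(\E\Tinf)<J_\infty$ is \emph{equivalent} to the single inequality $\V{\|v(\Tinf)\|_2^2}>2n(\E\Tinf)^2$.

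Next I would lower-bound this variance by the law of total variance with respect to $\Tinf$, namely $\V{\|v(\Tinf)\|_2^2}\ge\V{\,\E{\|v(\Tinf)\|_2^2\mid\Tinf}\,}$. At the exit time exactly one coordinate reaches $\pm1$, while the remaining $n-1$ are Brownian motions conditioned to have stayed in $(-1,1)$; for the small values of $\Tinf$ that dominate at large $n$ this conditioning is asymptotically negligible, so $\E{\|v(\Tinf)\|_2^2\mid\Tinf=t}=1+(n-1)t\,(1+o(1))$. The conditional mean is therefore essentially affine in $\Tinf$, giving $\V{\|v(\Tinf)\|_2^2}\gtrsim(n-1)^2\,\V\Tinf$. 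The target inequality then follows once $(n-1)^2\V\Tinf>2n(\E\Tinf)^2$, i.e. once $\V\Tinf/(\E\Tinf)^2$ decays no faster than order $1/n$, so the whole problem is reduced to a lower bound on the variance of $\Tinf$.

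This is exactly where Lemma~\ref{lem:ET_inf_var} enters. Since $\Tinf$ is the minimum of $n$ i.i.d. copies of $\tau_1$, I would use the small-time asymptotics of the one-dimensional exit time, $\Prob(\tau_1\le t)\sim\tfrac{4}{\sqrt{2\pi}}\sqrt{t}\,e^{-1/(2t)}$ as $t\downarrow0$ (from the reflection principle), to place $\Tinf$ in the Gumbel domain of attraction for minima: with $b_n\sim\tfrac{1}{2\log n}$ and $a_n\sim\tfrac{1}{2(\log n)^2}$ one gets weak convergence of $Z_n:=(\Tinf-b_n)/a_n$ to a Gumbel law $Z$ for minima, whose variance is $\V Z=\pi^2/6<\infty$, while $\E\Tinf\sim b_n$. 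Applying Lemma~\ref{lem:ET_inf_var} to $Z_n$ yields $\liminf_{n\to\infty}\V\Tinf/a_n^2=\liminf_{n\to\infty}\V{Z_n}\ge\pi^2/6$, so $\V\Tinf\ge c\,a_n^2$ for all large $n$ with some $c>0$. The point of invoking Lemma~\ref{lem:ET_inf_var} is precisely that it extracts this strictly positive lower bound from weak convergence alone, sparing us a direct proof that the rescaled minimum's variance converges.

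Assembling the estimates, $\V{\|v(\Tinf)\|_2^2}\gtrsim n^2\V\Tinf\gtrsim n^2a_n^2\sim n^2/(4(\log n)^4)$, whereas the threshold is $2n(\E\Tinf)^2\sim n/(2(\log n)^2)$; their ratio grows like $n/(\log n)^2\to\infty$, so the variance inequality—and hence $J_\mathrm{per}(\E\Tinf)<J_\infty$—holds for all $n\ge n_0$ once $n_0$ is chosen large enough. I expect the main obstacle to be the extreme-value step: rigorously justifying the small-time exit asymptotics, verifying the von Mises condition so that the rescaled minimum has a genuinely nondegenerate Gumbel limit (only then does Lemma~\ref{lem:ET_inf_var} furnish a strictly positive bound), and controlling the negligibility of the in-box conditioning used to obtain the affine conditional mean. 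The remaining steps are bookkeeping carried out with generous asymptotic slack.
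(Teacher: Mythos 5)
Your route differs genuinely from the paper's in its first half and coincides with it in the second. The paper reduces the comparison by following the proof of \cite[Thm.~2]{Meister2022} to get $\E{\int_0^{\Tinf} v_1(t)^2 \diff t} > \E{\Tinf}^2/2 + \V{\Tinf}/2 - c\,n^{-1/2}$ (the $c\,n^{-1/2}$ correction accounts for the dependence between $v_1$ and $\Tinf$), so that everything hinges on $\V{\Tinf}/2$ beating $c\,n^{-1/2}$. You instead derive the exact identity $J_\infty/J_\mathrm{per}(\E\Tinf) = \tfrac{n}{n+2}\bigl(1+\V{\lVert v(\Tinf)\rVert_2^2}/(n^2\E{\Tinf}^2)\bigr)$ --- which is the paper's own Proposition~\ref{prop:sim} combined with optional stopping of the martingale $\lVert v(t)\rVert_2^2 - nt$ --- and reduce the claim to $\V{\lVert v(\Tinf)\rVert_2^2} > 2n\E{\Tinf}^2$. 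That is self-contained (no appeal to the external inequality), quantifies the gap, and makes the $p=2$ benchmark transparent. The second half --- Gumbel convergence $2(\ln n)^2(\Tinf-a_n)\Rightarrow G$ plus Lemma~\ref{lem:ET_inf_var} to extract $\V{\Tinf}\gtrsim (\ln n)^{-4}$ --- is exactly the paper's argument. One small repair on your side: weak convergence alone does not give $\E{\Tinf}\sim 1/(2\log n)$, but you do not need it; the crude bound $\E{\Tinf}\le\E{\tau_1}=1$ makes your threshold at most $2n$, which is still dominated by $n^2/(\log n)^4$.

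The one step that is invalid as written is the transfer ``conditional mean essentially affine $\Rightarrow$ $\V{\lVert v(\Tinf)\rVert_2^2}\gtrsim (n-1)^2\V{\Tinf}$.'' Writing $\E{\lVert v(\Tinf)\rVert_2^2\mid \Tinf=t} = 1+(n-1)m(t)$ with $m(t):=\E{v_1(t)^2\mid \tau_1>t}$ (this identity is exact, by independence and exchangeability of the coordinates), the problem is that $m$ is affine only near $t=0$: as $t\to\infty$, $m(t)$ converges to the second moment of the quasi-stationary distribution on $(-1,1)$, so $m'$ vanishes in the tail. Since variance is not monotone under restriction to high-probability events, $\V{\Tinf}$ could a priori be carried by the region where $m$ is flat, in which case $\V{m(\Tinf)}\ll\V{\Tinf}$; ``$m'\approx 1$ on the window where $\Tinf$ concentrates'' does not by itself yield $\V{m(\Tinf)}\gtrsim\V{\Tinf}$. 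The fix stays entirely inside your toolkit: bypass $\V{\Tinf}$ and apply Lemma~\ref{lem:ET_inf_var} to the transformed variable. Since $m'(s)\to 1$ as $s\downarrow 0$ and $a_n\to 0$, a delta-method/Slutsky argument gives $2(\ln n)^2\bigl(m(\Tinf)-m(a_n)\bigr)\Rightarrow G$ as well, whence $\V{m(\Tinf)}\ge \tfrac12\V{G}/(4(\ln n)^4)$ for all large $n$. Plugging this into your law-of-total-variance bound gives $\V{\lVert v(\Tinf)\rVert_2^2}\gtrsim n^2(\ln n)^{-4} \gg 2n \ge 2n\E{\Tinf}^2$, and the rest of your assembly goes through unchanged.
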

\begin{proof}
	Due to Lemma~\ref{lem:DeltaEq1}, we only consider the case $\Delta_\infty=1$.
	In this proof, we will therefore omit the arguments of $\Tinf(1)$ and $Q_\infty(1)$ for better readability.

	As a first step, we can follow similar arguments as in the proof of \cite[Lem.~1]{Meister2022} to show that
	\begin{equation} \label{eq:limittheorem}
		2(\ln n)^2 \left({\Tinf}- a_n\right) \Rightarrow G,  \qquad \textrm{as } n\to\infty,
	\end{equation}
	with
	\begin{equation*}
		a_n := \frac{1}{2\ln n}-\frac{\ln \frac{\sqrt{2/\pi}}{(2\ln n)^{1/2}}}{2(\ln n)^2},
	\end{equation*}
	and where $\Rightarrow$ denotes convergence in distribution.
	Furthermore, $G$ is a Gumbel-distributed random variable, i.e.,
	\begin{equation*}
		\Prob( G\ge r) = \exp( - \exp(r)).
	\end{equation*}
	In addition to the direct proof in \cite[Lem.~1]{Meister2022}, \eqref{eq:limittheorem} can also be derived from \cite[Thm.~2.1.6]{Galambos1978}.
	We will therefore omit the details here.

	With Lemma~\ref{lem:ET_inf_var}, there exists an $n_1$ such that
	\begin{equation*}
		\V{2(\ln n)^2 \left({\Tinf}- a_n\right)} \geq 0.5\cdot\V G \quad \forall n\geq n_1,
	\end{equation*}
	which implies
	\begin{equation}\label{eq:var_ineq}
		\V{\Tinf} \geq \frac{1}{2}\cdot\frac{\V G}{4 (\ln n)^4} \quad \forall n\geq n_1,
	\end{equation}
	with $\V G = \pi^2/6$.
	Following a similar reasoning as in the proof of \cite[Thm.~2]{Meister2022}, we arrive at
	\begin{equation*}
		\E{\int_0^{\Tinf} v_1(t)^2 \diff t} > \frac{\E{\Tinf}^2}{2} + \frac{\V{\Tinf}}{2}  -  c \, n^{-1/2},
	\end{equation*}
	with $c=\mathbb{E}[ (\int_0^{T_2} v_1(t)^2 \diff t)^2 ]^{1/2}$ and $T_j := \inf\lbrace t>0 : \lvert x_j(t)\rvert = 1 \rbrace$ for all $j\in\lbrace1,\dots,n\rbrace$.
	Furthermore, there exists an $n_2$ such that $\frac{1}{4}\cdot \frac{\pi^2/24}{(\ln n)^{4}} -  c \, n^{-1/2} > 0$ for all $n\geq n_2$ and let $n_0\coloneqq\max(n_1,n_2)$.
	Together with \eqref{eq:var_ineq}, we have for $n\geq n_0$
	\begin{align*}
		\frac{1}{n}\, Q_\infty & =  \E{\int_0^{\Tinf} v_1(t)^2 \diff t}
		\\
		                       & >
		\frac{\E{\Tinf}^2}{2} + \frac{1}{4}\cdot \frac{\pi^2/24}{(\ln n)^{4}} -  c \, n^{-1/2}
		\\
		                       & > \frac{\E{\Tinf}^2}{2}
		=  \frac{1}{n}\, Q(\TTT=\E{\Tinf}),
	\end{align*}
	where we used Lemma~\ref{lem:Q} in the first step and Proposition~\ref{prop:TT_cost} in the last step.
	Multiplying both sides with $n/\E{\Tinf}$ gives the desired result.
\end{proof}

\begin{remark}
	Note that, in the proof of Theorem~\ref{thm:ET_cost_inf}, we have used a similar proof technique as in \cite{Meister2022,Meister2023} while we consider a different, i.e., non-cooperative, setup.
	In addition, we provide a much shorter proof for the desired inequality by leveraging Lemma~\ref{lem:ET_inf_var} instead of deriving the asymptotic order of the moments of $\Tinf$.
	The latter requires to show the convergence of the first and second moment in addition to the convergence in distribution in \eqref{eq:limittheorem}.
\end{remark}

With Theorem~\ref{thm:ET_cost_inf}, we are able to falsify Conjecture~\ref{conj:p_norm} for the $\infty$-norm.
In particular, we prove that triggering with the $\infty$-norm performs worse than periodic control for $n\geq n_0$ while triggering with the $2$-norm outperforms periodic control for any state dimension.
Thus, the Euclidean and maximum norm triggering schemes cannot result in the same performance as suspected in Conjecture~\ref{conj:p_norm}.
While it remains an open problem to examine the conjecture for $p$-norms other than $p=2$ and $p=\infty$, we can already conclude that the choice of the norm matters for the two examined cases.

In addition and as already mentioned, the decentralized triggering condition considered in \cite{Meister2022} can also be formulated via the maximum norm if all agent states are stacked in a vector.
While we consider a non-cooperative control goal instead of a consensus problem in this work, we find that periodic control outperforms $\infty$-norm triggering for sufficiently large system dimensions $n$, similarly to the phenomenon discovered in \cite{Meister2022}.
This observation as well as the underlying proofs indicate that the decentralized triggering condition is the key component inducing the found phenomenon.
Thereby, we also demonstrate that this phenomenon is not a direct consequence of considering a cooperative control goal.
Instead, we can conclude that the form of the deployed triggering condition and the information available to each agent are at the root of the observed twist in the performance relationship of periodic and event-based control.
Thus, future research can focus on whether there exists a consistent decentralized triggering rule and, if yes, what form it has.

\section{Simulation}\label{sec:sim}

After demonstrating the difference between $2$- and $\infty$-norm triggering from a theoretical perspective, we support these findings with Monte Carlo simulations in this section.
In addition, we will evaluate performance ratios for other $p$-norms and, thereby, try to attain more insights for which we have not presented a theoretical analysis yet.

We therefore perform Monte Carlo simulations in order to obtain estimates for the cost ratio $J_p/J_\mathrm{per}(\E{\tau_p})$ for various $p\in[1,\infty]$.
Note once more that, due to Lemma~\ref{lem:DeltaEq1}, the performance ratios obtained in simulation are valid for any choice of $\Delta_p>0$.
Thus, without loss of generality, we choose $\Delta_p=1$ in our simulations.
Given the simulation results for the event-based control schemes, we can then estimate the cost ratio under perfect satisfaction of the constraint $\tau_\mathrm{per}=\E{\tau_p}$ by leveraging Proposition~\ref{prop:TT_cost}.
In addition, since Lemma~\ref{lem:0T} applies, we can terminate a Monte Carlo run when the first event has been reached.
We can therefore estimate the cost ratio based on the following result.
\begin{prop}\label{prop:sim}
	Let us consider system dynamics \eqref{eq:system} with control input \eqref{eq:input} and let $\tau$ be any stopping time satisfying the assumptions of Lemmas~\ref{lem:0T} and \ref{lem:Q}.
	Then, the cost ratio $J(\tau)/J_\mathrm{per}(\E{\tau})$ can be expressed as
	\begin{equation*}
		\frac{J(\tau)}{J_\mathrm{per}(\E{\tau})} = \frac{\E{R(\tau)^4}}{n(n+2)\E{\tau}^2}
	\end{equation*}
	where $R(t):=\lVert v(t) \rVert{}_2$ is a Bessel process of dimension $n$ started at $R(0)=0$.
\end{prop}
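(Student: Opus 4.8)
The plan is to first collapse the cost ratio into a single scalar expectation using the results already at hand, and then to evaluate that expectation through an application of Itô's formula to the Bessel process. Combining Lemmas~\ref{lem:0T} and \ref{lem:Q} gives $J(\tau) = Q(\tau)/\E{\tau} = n\,\E{\int_0^\tau v_1(t)^2 \diff t}/\E{\tau}$, whereas Proposition~\ref{prop:TT_cost} evaluated at $\TTT=\E{\tau}$ gives $J_\mathrm{per}(\E{\tau}) = n\,\E{\tau}/2$. Dividing the two, the factor $n$ cancels and I obtain
\begin{equation*}
	\frac{J(\tau)}{J_\mathrm{per}(\E{\tau})} = \frac{2\,\E{\int_0^\tau v_1(t)^2 \diff t}}{\E{\tau}^2}.
\end{equation*}
It therefore suffices to establish the identity $\E{\int_0^\tau v_1(t)^2 \diff t} = \E{R(\tau)^4}/(2n(n+2))$, from which the claimed expression follows immediately.

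For this identity I would apply Itô's formula to $f(v) = (v^\top v)^2 = R^4$ along the Wiener process $v(t)$. Using $\partial_j f = 4 v_j R^2$ and $\operatorname{tr}(\nabla^2 f) = (4n+8)R^2$ together with $R(0)=0$, Itô's formula yields
\begin{equation*}
	R(\tau)^4 = 4\int_0^\tau R(t)^2\, v(t)^\top \diff v(t) + (2n+4)\int_0^\tau R(t)^2 \diff t.
\end{equation*}
The first term on the right is a (local) martingale started at zero, so that its expectation vanishes, leaving $\E{R(\tau)^4} = (2n+4)\,\E{\int_0^\tau R(t)^2 \diff t}$. Since $R(t)^2 = \sum_{i=1}^n v_i(t)^2$ and $\tau$ is direction-independent in the sense of Lemma~\ref{lem:Q}, every coordinate contributes identically, giving $\E{\int_0^\tau R(t)^2 \diff t} = n\,\E{\int_0^\tau v_1(t)^2 \diff t}$. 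Substituting this back produces $\E{R(\tau)^4} = 2n(n+2)\,\E{\int_0^\tau v_1(t)^2 \diff t}$, which is precisely the required identity.

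The step I expect to be the main obstacle is the vanishing of the expected stochastic integral, i.e., the optional-stopping argument, since the integrand $4R^2 v$ renders $\int_0^t R(t)^2 v(t)^\top \diff v(t)$ only a local martingale a priori. To make this rigorous, I would localize with the stopping times $\tau_m = \tau \wedge \inf\{t : R(t) \geq m\} \wedge m$, on which the integrand is bounded and the integral is a genuine martingale with zero mean, so that $\E{R(\tau_m)^4} = (2n+4)\,\E{\int_0^{\tau_m} R(t)^2 \diff t}$. Monotone convergence then handles the nonnegative right-hand side as $m\to\infty$, and the finiteness of $Q(\tau)$ --- equivalently of $\E{\int_0^\tau R(t)^2 \diff t}$ --- together with the moment estimates available for the level-crossing stopping times considered in this paper provides the uniform integrability needed to pass $\E{R(\tau_m)^4}\to\E{R(\tau)^4}$, thereby closing the argument.
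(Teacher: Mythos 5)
Your proof is correct and takes essentially the same route as the paper: both first collapse the ratio via Lemmas~\ref{lem:0T}, \ref{lem:Q} and Proposition~\ref{prop:TT_cost}, and both then rest on the identity $\E{R(\tau)^4} = 2(n+2)\,\E{\int_0^\tau R(t)^2 \diff t}$, which the paper derives from Dynkin's formula for the Bessel SDE with $f(x)=x^4$ and you derive by applying It\^o's formula to $\lVert v(t)\rVert_2^4$ on $\mathbb{R}^n$ --- the same computation in different clothing, since Dynkin's formula is It\^o plus optional stopping. If anything, your explicit localization via $\tau_m$ is more careful than the paper's bare invocation of Dynkin's formula, which for the unbounded function $f(x)=x^4$ implicitly requires exactly the justification you supply (and which is easy here, since for the level-triggering stopping times $R(t)$ is bounded by a norm-equivalence constant times $\Delta_p$ up to time $\tau$).
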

\begin{proof}
	See Appendix~\ref{app:proof_sim}.
\end{proof}

We simulate the system utilizing the Euler-Maruyama method with a step size of $10^{-4}\,\mathrm{s}$.
Moreover, we perform $20\,000$ Monte Carlo runs per experiment to estimate the required expected values in the cost ratio expression from Proposition~\ref{prop:sim}.

\begin{figure}
	\centering
	\input{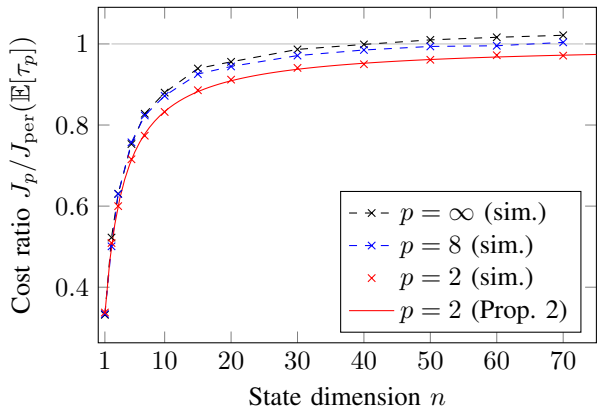}
	\setlength{\abovecaptionskip}{0pt}
	\caption{Cost ratio $J_p/J_\mathrm{per}(\E{\tau_p})$ for different $p$-norms in the triggering condition.
		Simulation results are interpolated with dashed lines.}
	\label{fig:cost_ratio}
\end{figure}

The simulation results are shown in Fig.~\ref{fig:cost_ratio}.
We have simulated the system under $2$-, $8$- and $\infty$-norm triggering conditions for various state dimensions.
In addition, we also depict the closed-form solution of the cost ratio given in Proposition~\ref{prop:ET_cost_2} for the $2$-norm case.
Firstly, we can validate the simulation results by comparing them to the closed-form solution for the $2$-norm triggering condition.
On the one hand, we observe that the estimated cost ratio is close to the closed-form solution for the state dimensions considered in the simulation.
On the other hand, there remains some deviation resulting from the finite number of Monte Carlo samples per run and the utilized step size.

Secondly, we are able to confirm our theoretical result that, with the $\infty$-norm triggering condition, periodic control outperforms event-based control beyond a certain state dimension.
Furthermore, we observe that the performance ratio is lower bounded by the one for the $2$-norm case.

Thirdly, we also simulated the system under $8$-norm triggering.
Note that $2$-, $8$- and $\infty$-norm naturally yield equal cost ratios for $n=1$.
We can already observe a significant performance disadvantage of the $8$- and $\infty$-norm compared to the $2$-norm triggering condition for larger single digit state dimensions.
Moreover, we find that $8$-norm level-triggered control also seems to suffer from the problem that periodic control can outperform it for a sufficiently large state dimension.
Given the simulation results, this state dimension is larger than in the $\infty$-case and seems to lie somewhere around $n\approx 70$ as opposed to $n\approx 40$.

In order to examine the relative performance disadvantage of triggering with other $p$-norms than the $2$-norm more closely, we plot the cost ratios $J_p/J_2$ under the constraint $\E{\tau_p}=\E\Ttwo$ for various $p$ in Fig.~\ref{fig:cost_ratio_wrt_2norm}.
We obtain them by computing the ratio of the corresponding $p$-norm and $2$-norm results from Fig.~\ref{fig:cost_ratio}.
We observe that the ratios $J_p/J_2$ increase rapidly beyond $n=1$ for all simulated $p$-norms.
Moreover, the performance disadvantage of $p$-norm triggering compared to 2-norm triggering grows with increasing $p>2$.

\begin{figure}
	\centering
	\input{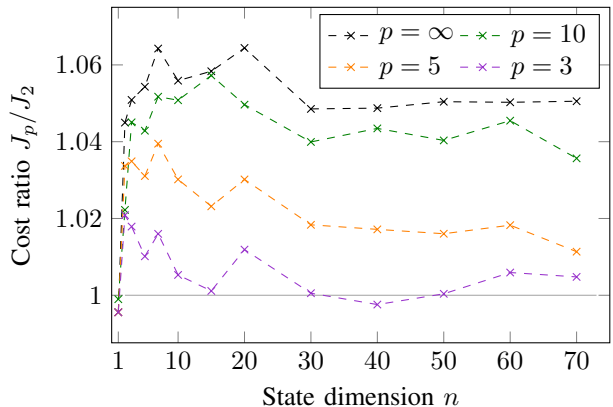}
	\setlength{\abovecaptionskip}{0pt}
	\caption{Cost ratio $J_p/J_2$ for different $p$-norms in the triggering condition under the constraint $\E{\tau_p}=\E\Ttwo$.}
	\label{fig:cost_ratio_wrt_2norm}
\end{figure}

Consequently, the simulation results confirm our theoretical findings.
As demonstrated for the $8$- and $\infty$-norm case, the observed performance degradation for $p$-norms with $p\neq2$ can even render the event-based control scheme inconsistent depending on the state dimension.
Moreover, the simulation results point to the conclusion that the choice of the $p$-norm has a performance implication for any choice of $p\in[1,\infty]$.
In particular, we formulate the conjecture that
\begin{conj}
	The performance of the considered level-triggered sampling schemes degrades with increasing $p>2$.
\end{conj}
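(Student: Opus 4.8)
The plan is to collapse the cost ratio of Proposition~\ref{prop:sim} to a single scalar functional of the first exit point from the $p$-ball and then to prove that this functional is monotone in $p$. Since the coordinates of $v$ are independent standard Wiener processes, $\lVert v(t)\rVert_2^2-nt$ is a martingale. The ball $\{\lVert x\rVert_p\le\Delta_p\}$ is bounded, so $\tau_p$ has finite expectation and the stopped process is dominated by $C+n\tau_p$ for a constant $C$; optional stopping then gives $\E{R(\tau_p)^2}=n\,\E{\tau_p}$. Inserting this identity into Proposition~\ref{prop:sim} yields
\begin{equation*}
	\frac{J_p}{J_\mathrm{per}(\E{\tau_p})}=\frac{n}{n+2}\,\kappa_p,\qquad \kappa_p:=\frac{\E{R(\tau_p)^4}}{\E{R(\tau_p)^2}^2}.
\end{equation*}
By Jensen's inequality $\kappa_p\ge1$, with equality precisely when $R(\tau_p)$ is almost surely constant; for $p=2$ the exit point lies on the Euclidean sphere, so $\kappa_2=1$ and we recover Proposition~\ref{prop:ET_cost_2} together with the optimality of $p=2$. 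The conjecture is therefore equivalent to the claim that $\kappa_p$ is nondecreasing in $p$ on $[2,\infty]$. This reformulation is consistent with Theorem~\ref{thm:ET_cost_inf}, whose conclusion $J_\mathrm{per}(\E\Tinf)<J_\infty$ is exactly the inequality $\kappa_\infty>1+2/n$ for $n\ge n_0$.

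The functional $\kappa_p$ depends only on the law of $R(\tau_p)^2=\lVert v(\tau_p)\rVert_2^2$, which is the image under $x\mapsto\lVert x\rVert_2^2$ of the harmonic measure $\mu_p$ of the $p$-ball viewed from its centre. Writing $\kappa_p=1+\V{R(\tau_p)^2}/\E{R(\tau_p)^2}^2$, the conjecture becomes the statement that the squared coefficient of variation of $\lVert x\rVert_2^2$ under $\mu_p$ increases with $p$. The geometric mechanism is transparent: as $p$ grows the boundary $\{\lVert x\rVert_p=1\}$ bulges toward the coordinate corners, where $\lVert x\rVert_2$ reaches $n^{1/2-1/p}$, while near the face centres $\lVert x\rVert_2$ stays close to $1$, so the exit length should spread out more. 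I would try to turn this into a proof either through a shape-derivative (Hadamard-type) computation of $\tfrac{d}{dp}\kappa_p$, differentiating the Dirichlet problem that defines $\mu_p$ as the domain deforms, or through a coupling of the Brownian exits from nested $p$-balls that quantifies how exit mass migrates toward the corners.

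The hard part will be controlling the dependence of $\mu_p$ on $p$. For $p\neq2$ the $p$-ball is not rotationally symmetric, so no closed form for $\mu_p$ is available and one must estimate the harmonic measure of a domain whose boundary is merely piecewise smooth, and nonsmooth in the limit $p\to\infty$. A direct coupling is frustrated by a scale mismatch, since the unit $p$-ball and the unit $p'$-ball have different sizes and the naive nesting does not preserve the normalisation entering $\kappa_p$. As a more accessible intermediate goal I would instead pursue the asymptotic-in-$n$ version, mirroring the argument of Theorem~\ref{thm:ET_cost_inf}: for each fixed $p$ one seeks an extreme-value limit theorem for $\tau_p$ as $n\to\infty$, extracts a leading-order lower bound on $\V{\tau_p}$ as in \eqref{eq:var_ineq}, and compares these bounds across $p$. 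This would establish the ordering for large $n$, but would still fall short of the full monotonicity in $p$ asserted by the conjecture.
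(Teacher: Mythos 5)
The statement you were asked to prove is not proved in the paper at all: it is the paper's own \emph{new} conjecture, distilled from the Monte Carlo experiments of Section~\ref{sec:sim} (Figs.~\ref{fig:cost_ratio} and~\ref{fig:cost_ratio_wrt_2norm}), and the conclusion explicitly lists its theoretical characterization as an open problem. The only proved ingredients in the paper are the two endpoints: Proposition~\ref{prop:ET_cost_2} for $p=2$ and Theorem~\ref{thm:ET_cost_inf} for $p=\infty$ (the latter only for $n\geq n_0$). So there is no paper proof to compare against, and your attempt has to be judged on its own.

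On those terms, your reduction is correct and actually goes beyond what the paper establishes. The optional-stopping step is sound: $\lVert v(t)\rVert_2^2-nt$ is a martingale, $\tau_p$ is the exit time of a bounded domain so $\E{\tau_p}<\infty$ and $R(t\wedge\tau_p)^2$ is bounded, whence $\E{R(\tau_p)^2}=n\,\E{\tau_p}$; inserted into Proposition~\ref{prop:sim} this gives
\begin{equation*}
	\frac{J_p}{J_\mathrm{per}(\E{\tau_p})}=\frac{n}{n+2}\,\kappa_p,\qquad \kappa_p=\frac{\E{R(\tau_p)^4}}{\E{R(\tau_p)^2}^2},
\end{equation*}
a quantity independent of $\Delta_p$, consistent with Lemma~\ref{lem:DeltaEq1}. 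Your consistency checks are right: Jensen gives $\kappa_p\geq1$ with equality iff the exit radius is a.s.\ constant, i.e.\ iff $p=2$, recovering Proposition~\ref{prop:ET_cost_2}; and Theorem~\ref{thm:ET_cost_inf} is exactly $\kappa_\infty>1+2/n$. Note that this already yields a genuine partial result the paper does not state: for every $p\neq2$ the scheme is \emph{strictly} worse than $2$-norm triggering at equal average rate, since harmonic measure on a non-Euclidean $p$-sphere charges points of different Euclidean length. However, that is where the proof ends and the conjecture begins. The entire content of the statement is the monotonicity of $\kappa_p$ in $p$ on $(2,\infty]$, and you never establish it: the shape-derivative route requires controlling how the harmonic measure of a non-rotationally-symmetric (and, as $p\to\infty$, nonsmooth) domain varies with $p$, for which you provide no estimate; the coupling route is blocked by the normalization mismatch you yourself identify; and the extreme-value route would at best reproduce, for each fixed $p$, a large-$n$ statement analogous to Theorem~\ref{thm:ET_cost_inf}, which does not order $\kappa_p$ against $\kappa_{p'}$ for $2<p<p'$ at fixed finite $n$ --- a shortfall you concede. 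So the proposal is a valuable, correct reformulation (it would sharpen the conjecture into a clean question about harmonic measure of $p$-balls), but it leaves a gap that coincides with the full strength of the claim.
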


\section{Conclusion}\label{sec:conclusion}

In this work, we have examined Conjecture~\ref{conj:p_norm} from \cite[Rem.~1]{Goldenshluger2017}, namely that the selected $p$-norm has no impact on the performance of the level-triggered sampling scheme in an $n$-dimensional single-integrator setup.
We have falsified the conjecture by our theoretical analysis in the $\infty$-norm case.
The provided simulation results confirm this result and, moreover, indicate that the conjecture does not hold for other $p$-norms either.
In addition, based on the performed Monte Carlo simulations, we formulated the new conjecture that the performance of the $p$-norm level-triggering schemes degrades with increasing $p>2$ in the considered setup.

These findings also provide important insights into the source of the phenomenon found in \cite{Meister2022}, namely that event-based control might be outperformed by periodic control in a particular multi-agent system consensus setup.
With the results derived in this paper, we demonstrate that the decentralized triggering condition in \cite{Meister2022}, which can also be written in terms of the $\infty$-norm, is at the root of this performance degradation.
This also reveals that the observed phenomenon is not a direct consequence of considering a cooperative control goal.
Consequently, future research on the phenomenon should focus on examining decentralized triggering conditions and their performance implications.

In future work, we plan to provide theoretical results on the performance of other $p$-norm triggering conditions.
While their behavior has already been examined in simulation in this work, characterizing it theoretically remains an open problem.
In addition, exploring consistency of decentralized triggering conditions is at the core of explaining and potentially resolving the phenomenon found in \cite{Meister2022}.

\appendix

\subsection{Proof of Lemma~\ref{lem:scaling}}
\label{app:proof_scaling}

Let us show the first identity.
Since triggering condition \eqref{eq:ET_cond} satisfies the assumptions of Lemmas~\ref{lem:0T} and \ref{lem:Q}, we have
\begingroup		%
\allowdisplaybreaks
\begin{align*}
	    & Q_p(\Delta_p)/n                                                                                                                     \\
	={} & \E{\int_0^{\tau_p} v_1(s)^2 \diff s}
	\\
	={} & \E{\int_0^{\inf\{ t>0 \;:\; \lVert v_k(\Delta_p^2 t / \Delta_p^2)\rVert{}_p = \Delta_p\}} v_1(\Delta_p^2 s/\Delta_p^2)^2 \diff s}
	\\
	={} & \E{\int_0^{\inf\{ t>0 \;:\; \Delta_p \lVert v_k( t / \Delta_p^2)\rVert{}_p = \Delta_p\}} \Delta_p^2 v_1(s/\Delta_p^2)^2 \diff s}
	\\
	={} & \Delta_p^2 \, \E{\int_0^{\Delta_p^{-2} \inf\{ \Delta_p^2 t' >0 \;:\; \lVert v_k(t')\rVert{}_p = 1\}} v_1(s')^2 \Delta_p^2 \diff s'}
	\\
	={} & \Delta_p^4 \, \E{\int_0^{\inf\{ t' >0 \;:\; \lVert v_k( t')\rVert{}_p = 1\}}  v_1(s')^2 \diff s'}
	\\
	={} & \Delta_p^4 \, Q_p(1)/n.
\end{align*}
\endgroup
The third step leverages the scaling property of Brownian motions.
In the fourth step, we applied linear integral substitution.
The other formula is proved similarly.

\subsection{Proof of Lemma~\ref{lem:ET_inf_var}}
\label{app:proof_ET_inf_var}

For a random variable $Y$ and a real number $M>0$, define its saturated version
\begin{equation*}
	Y^M := \begin{cases}
		Y,                    & |Y|\leq M,        \\
		M \,\mathrm{sign}(Y), & \text{otherwise}.
	\end{cases}
\end{equation*}
It is well known that a Lipschitz transform with Lipschitz constant 1 (or smaller) reduces the variance, hence $\V{Y^M} \leq \V{Y}$.

By the continuous mapping theorem (cf. \cite[Thm.~2.3]{Vaart2000}), $Z_n^M$ weakly converges to $Z^M$ since $Z_n$ weakly converges to $Z$.
Therefore, $\V{Z_n^M}$ converges to $\V{Z^M}$ when $n\to\infty$, as the random variables are bounded.
Thus, it follows that
\begin{equation*}
	\liminf_{n\to\infty} \V{Z_n} \geq \liminf_{n\to\infty} \V{Z_n^M} = \V{Z^M}.
\end{equation*}
With $\V Z < \infty$, letting $M\to\infty$ shows the claim since
\begin{equation*}
	\lim_{M\to\infty} \V{Z^M} = \V Z.
\end{equation*}

\subsection{Proof of Proposition~\ref{prop:sim}}
\label{app:proof_sim}

The key observation is that
\begin{equation*}
	Q(\tau) = \frac{1}{2(n+2)} \, \E{R(\tau)^4},
\end{equation*}
for any stopping time $\tau$ satisfying the assumptions in Lemmas~\ref{lem:0T} and \ref{lem:Q}.
Following along the lines of \cite[Lem.~7]{Meister2023}, this identity can be shown by considering the process $(R(t))_{t\in\lbrack 0,\infty)}$ solving the stochastic differential equation
\begin{equation*}
	\diff R(t) = \frac{n-1}{2R(t)} \diff t + \diff w(t),
\end{equation*}
where $w(t)$ is a standard Wiener process.
Let $Af(x):= \frac{n-1}{2x} f'(x) +\frac{1}{2} f''(x)$ be the infinitesimal generator of the Markov process $(R(t))_{t\in\lbrack 0,\infty)}$.
Then, according to Dynkin's formula, we have for any stopping time $\tau$
\begin{equation*}
	\E{f(R(\tau))} = \E{\int_0^{\tau} A f(R(t)) \diff t}.
\end{equation*}
Choosing $f(x)=x^4$ yields
\begin{equation*}
	\E{R(\tau)^4} = 2(n+2) \E{\int_0^{\tau} R(t)^2 \diff t}.
\end{equation*}
Utilizing Lemma~\ref{lem:Q}, we arrive at
\begin{equation*}
	\E{\int_0^{\tau} R(t)^2 \diff t} = n \, \E{\int_0^{\tau} v_1(t)^2 \diff t} = Q(\tau),
\end{equation*}
which allows us to obtain the stated identity.

Combining this result with Lemma~\ref{lem:0T} and Proposition~\ref{prop:TT_cost} presents us with the desired expression for $J(\tau)/J_\mathrm{per}(\E{\tau})$.

\section*{Acknowledgment}

We thank Frank Aurzada from the Technical University of Darmstadt for the fruitful discussions.

\bibliographystyle{IEEEtran}
\bibliography{IEEEabrv,references}

\end{document}